\newtheorem{theorem}{Theorem}
\newtheorem{lemma}{Lemma}
\newtheorem{remark}{Remark}
\newtheorem{corollary}{Corollary}
\newtheorem{proof}{Proof}
\def\BibTeX{{\rm B\kern-.05em{\sc i\kern-.025em b}\kern-.08em
		T\kern-.1667em\lower.7ex\hbox{E}\kern-.125emX}}
\def\BibTeX{{\rm B\kern-.05em{\sc i\kern-.025em b}\kern-.08em
  T\kern-.1667em\lower.7ex\hbox{E}\kern-.125emX}}
\begin{document}

\setlength{\textfloatsep}{0.11cm}
 \setlength{\abovedisplayskip}{0.10cm}
 \setlength{\belowdisplayskip}{0.10cm}
 \setlength{\abovecaptionskip}{-1mm}

\title{Capacity Region of Asynchronous Multiple Access Channels with FTN}
\author{Zichao~Zhang,~\IEEEmembership{Student Member,~IEEE,}
		Melda~Yuksel,~\IEEEmembership{Senior Member,~IEEE,}
 Gokhan~M.~Guvensen,~\IEEEmembership{Member,~IEEE,}
		and~Halim~Yanikomeroglu,~\IEEEmembership{Fellow,~IEEE}
\thanks{This work was supported in
part by the Natural Sciences and Engineering Research Council of
Canada, NSERC, under a Discovery Grant and in part by the Scientific and
Technological Research Council of Turkey, TUBITAK, under Grant
122E248.
}\thanks{Z. Zhang and H. Yanikomeroglu are with the Department of Systems and Computer Engineering at Carleton University, Ottawa, ON, K1S 5B6, Canada e-mail: zichaozhang@cmail.carleton.ca,	halim@sce.carleton.ca.}
		\thanks{M. Yuksel and G. M. Guvensen are with the Department of Electrical and Electronics Engineering, Middle East Technical University, Ankara, 06800, Turkey, e-mail: ymelda@metu.edu.tr, guvensen@metu.edu.tr.}}


\maketitle

\begin{abstract}

This paper { derives} the capacity region of asynchronous multiple access channel (MAC) with faster-than-Nyquist (FTN) signaling. We first express the capacity region in the frequency domain. Next, we prove that the capacity definition for finite memory MAC can be generalized to infinite memory MAC. The achievable rate region utilizing the finite memory definition in fact achieves the same region calculated in the frequency domain. Our analysis shows that power optimization is { necessary to achieve the capacity region for asynchronous MAC and FTN}. 

\end{abstract}

\begin{IEEEkeywords}
Capacity, faster-than-Nyquist (FTN), multiple access channel (MAC), asynchronous transmission.
\end{IEEEkeywords}

\section{Introduction}
The rapid growth 
 of need in rate and number of devices proposes a challenge to modern communication systems. Multiple access communications is considered to be one of the potential solutions for 5G and beyond \cite{noma5g}.
Compared to orthogonal multiple access (OMA), multiple access performs non-orthogonal resource allocation. For instance, one frequency band can be shared by more than one user. Besides increased connectivity, multiple access achieves rate pairs that OMA is not able to achieve.  

Faster-than-Nyquist signaling is another promising physical layer technology for future communication systems\cite{mazo}. It improves spectral efficiency by increasing signaling rate, while maintaining the same power consumption\cite{rusek}. 
Since the groundbreaking work of Mazo in 1975 \cite{mazo}, 
there has been a
substantial amount of research on FTN \cite{surveyftn}. The information-theoretical study shows that applying FTN to communication systems improves capacity \cite{rusek} and this improvement becomes more favorable when FTN is applied to multi-antenna communication systems \cite{ourpaper}. 


To support multiple devices sharing the same resources as well as satisfying rate requirements, it is beneficial to exploit the multiple access channel (MAC) with FTN. However, in practice, each device will experience a random time delay. Instead of being a hazard to the system, this asynchronism 
is analyzed in \cite{verdu,ganji} and \cite{outperform} and is shown to be beneficial to multiple access transmission. In \cite{verdu}, the author explored the capacity region of asynchronous MAC with fixed or random time delay differences and showed that these differences bring in additional gains. However, \cite{verdu} is constrained to rectangular pulse shapes in time. The authors of \cite{ganji} removed this limitation and derived the capacity region for band-limited pulse shapes. In \cite{outperform}, the authors studied achievable rates for uplink NOMA with FTN for random link delays and fixed power allocation, and they find that asynchronous transmission is advantageous. In this paper, we derive the capacity region of the asynchronous MAC with FTN with fixed delays, which is significantly larger than \cite{outperform}. 

The organization of the paper is as follows. In Section II we establish the system model. In Section III we derive the capacity region. In Section IV we show that the capacity region for discrete MAC with finite memory defined in \cite{verdu} actually leads to the same region as in Section III. In Section V we plot the rate regions for a finite number of symbols and in Section VI we conclude the paper.

\section{System Model}
The MAC is composed of $K$ transmitters and one receiver. Due to imperfect clock generation or different propagation delays, signals coming from each transmitter have different time delays. We denote them as $\tau_1, \tau_2, \dots, \tau_K$, where $\tau_k\in[0,T], k=1,\dots K$. Without loss of generality, we assume $\tau_1\leq\tau_2\dots\leq\tau_K$. 

All the transmitters use the same pulse shaping filter $p(t)$ and the same acceleration factor $\delta$ for FTN. The signal transmitted from the $k$th user, $x_{k}(t)$ then has the form 
\begin{equation}
	x_{k}(t) = \sum_{m=0}^{N-1}a_{k}[m]p(t-m\delta T-\tau_{k}) \label{eqn:txsignal},
\end{equation}
where $a_{k}[m]$ are the symbols transmitted from the $k$th user and $N$ is the number of symbols transmitted. At the receiver, the matched filter $p^*(-t)$ is applied. 

An additive white Gaussian noise $\xi(t)$ with power spectral density $\sigma_0^2$ is added at the receiver. After passing through the matched filter this white noise becomes correlated. We denote this noise as $\eta(t)=\xi(t)\star p^*(-t)$, where $\star$ denotes the convolution operation. The signal at the output of the matched filter is $y(t)=\left(\sum_{k=1}^{K}x_{k}(t)+\xi(t)\right)\star p^*(-t)$.

In order to obtain sufficient statistics in this asynchronous MAC with FTN, we need to sample according to the time delay of each user. Thus, we sample at all $t=n\delta T+\tau_k, n=0,1,\dots, N-1, k=1, \dots, K$ and obtain $K$ sets of samples instead of a single set \cite{ganji}. 
Then, the samples $y_{k}[n]$ corresponding to user $k$ are written by sampling the output of the matched filter, $y(t)$, at time $n\delta T+\tau_{k}, n=0, \dots, N-1$, and we write
\begin{align}
	y_{ k}[n] 
 &=\sum_{{ l}=1}^{K}\sum_{m=0}^{N-1}a_l[m]g\big((n-m)\delta T+(\tau_{ k}-\tau_{ l})\big) + \eta_{ k}[n]. \label{eqn:modeleq3}
\end{align}
Here $g(t)=p(t)\star p^*(-t)$. Furthermore,
\begin{equation}
	\eta_{k}[n]=\eta(n\delta T +\tau_{k}) = \xi(t)\star p^*(-t)|_{t=n\delta T +\tau_{k}}. \label{eqn:noise}
\end{equation}

By defining the $N\times 1$ vectors $\bm{y}_{k}$, $\bm{a}_{k}$ and $\bm{\eta}_{k}$ to represent respectively the output samples, data symbols and noise, the input-output relationship in \eqref{eqn:modeleq3} can be written in a compact matrix product form as
\begin{equation}
	\left[\begin{matrix}
		\bm{y}_1 \\\bm{y}_2 \\\vdots \\\bm{y}_K
	\end{matrix}\right]= \left[\begin{matrix}
	\bm{G}_{11} & \bm{G}_{12} & \dots &\bm{G}_{1K}\\ \bm{G}_{21} & \bm{G}_{22} & \dots &\bm{G}_{2K} \\ \vdots & \vdots & \ddots & \vdots \\ \bm{G}_{K1} & \bm{G}_{K2} & \dots &\bm{G}_{KK} 
\end{matrix}\right] \left[\begin{matrix} \bm{a}_1 \\ \bm{a}_2 \\ \vdots \\ \bm{a}_K
\end{matrix}\right] + \left[\begin{matrix} \bm{\eta}_1 \\ \bm{\eta}_2 \\ \vdots \\ \bm{\eta}_K
\end{matrix}\right].  \label{eqn:matproduct}
\end{equation}
This expression can further be simplified as
\begin{equation}
	\bm{y}=\tilde{\bm{G}}\bm{a}+\bm{\eta},  \label{eqn:model2ue}
\end{equation}
where $\bm{y}=[\bm{y}_1^\top,\dots,\bm{y}_K^\top]^\top$, $\bm{a}=[\bm{a}_1^\top,\dots,\bm{a}_K^\top]^\top$ and $\bm{\eta}=[\bm{\eta}_1^\top,\dots,\bm{\eta}_K^\top]^\top$. The matrix $\tilde{\mathbf{G}}$ in \eqref{eqn:model2ue} is $KN \times KN$. The matrix $\bm{G}_{kl}$ in \eqref{eqn:matproduct} is the $N\times N$ interference matrix. It represents user ${l}$'s effect on the samples of user ${k}$ and its $(n,m)$th entry, $n,m = 1,\ldots,N$, is 
$(\bm{G}_{kl})_{n,m}=g\big((n-m)\delta T+(\tau_{k}-\tau_{l})\big)$. In this paper, we focus on the special case of $K=2$. Note that, the matrix $\bm{G}_{kl}$ is a Toeplitz matrix. An $N\times N$ Toeplitz matrix $\bm{T}_N$ has the structure $(\bm{T}_N)_{i,j}=t_{i-j}, i,j=0,\dots,N-1$. Its generating function $\mathcal{G}$ is defined as 
\begin{equation}
  \mathcal{G}(\bm{T}_N)=\sum_{k=-\infty}^{\infty}t_ke^{j2\pi \lambda k}, \lambda\in\left[-\frac{1}{2},\frac{1}{2}\right]. \label{eqn:genefuncdef}
\end{equation}


\section{The Capacity Region Analysis}
 In this section we derive the capacity region in the frequency domain. The capacity region $C$ of the two-user multiple access channel with memory is defined as \cite{ganji} 
\begin{equation}
	\begin{aligned}
		C= \underset{{\int_{-\frac{1}{2}}^{\frac{1}{2}}S_k(\lambda)d\lambda\leq P_k,~ S_k(\lambda)\geq 0,~\lambda\in\left[-\frac{1}{2},\frac{1}{2}\right],~k=1,2}}{\bigcup}\bigg\{&(R_1, R_2): \notag 
 \end{aligned}\end{equation}
 \begin{equation}
 \begin{aligned}
 &~~~~~~~~~~~~~~0\leq R_1\leq \underset{N\rightarrow \infty}{\lim}\frac{1}{N}I_N(\bm{a}_1;\bm{y}| \bm{a}_2) \label{eqn:capregiondef}\\
		 &~~~~~~~~~~~~~~0\leq R_2\leq \underset{N\rightarrow \infty}{\lim}\frac{1}{N}I_N(\bm{a}_2;\bm{y}| \bm{a}_1) \\
		 &~~~~~~~~~~~~~~0\leq R_1+R_2\leq \underset{N\rightarrow \infty}{\lim}\frac{1}{N}I_N(\bm{a}_1, \bm{a}_2;\bm{y}) \bigg\},
	\end{aligned}
\end{equation}
 where $S_1(f_n)$ and $S_2(f_n)$ are the power spectral densities of user 1 and user 2, while $P_1$ and $P_2$ are the power constraints. In \eqref{eqn:capregiondef}, $I_N$ is the mutual information between two random vectors with length $N$ \footnote{{ The sum rate in \cite{outperform} is calculated as $I_N(\bm{a}_1; \bm{y}_1|\bm{a}_2)+I_N(\bm{a}_2; \bm{y}_2)$.}}.

 In FTN signaling, the input power spectrum to the physical channel contains the effect of both data symbols as well as FTN \cite{ourpaper,ganji}. This can be written as 
 \begin{equation}
   S_k(\lambda)=\frac{1}{\delta T}G_\delta(\lambda)S_{ak}(\lambda), 
 \end{equation}
where $G_\delta(\lambda)$ is the folded spectrum defined as
 \begin{equation}
   G_\delta(\lambda) = \frac{1}{\delta T}\sum_{n=-\infty}^{\infty}\left|P\left(\frac{\lambda-n}{\delta T}\right)\right|^2=\frac{1}{\delta T}\sum_{n=-\infty}^{\infty}G\left(\frac{\lambda-n}{\delta T}\right)\label{eqn:folded}
 \end{equation}
 and $P(\cdot)$ and $G(\cdot)$ are respectively the continuous time Fourier transforms of $p(t)$ and $g(t)$. The data power spectrum $S_{ak}(\lambda), k=1,2,$ is 
 obtained by the discrete-time Fourier transform of the autocorrelation function of input symbols, $R_{ak}[n]=\mathbb{E}[a_k[m+n]a_k^*[m]]$; i.e., 
 \begin{equation}
  S_{ak}(\lambda)=\sum_{n=-\infty}^{\infty}R_{ak}[n]e^{-j2\pi\lambda n}, k=1,2. \label{eqn:dataspect}
 \end{equation}
 Therefore the power constraint of user $k$ is 
 \begin{equation}
   \frac{1}{\delta T}\int_{-\frac{1}{2}}^{\frac{1}{2}}G_\delta(f_n)S_k(f_n)df_n\leq P_k.
 \end{equation}


In order to obtain a closed-form expression for \eqref{eqn:capregiondef}, we need to calculate the mutual information expressions. The differential entropy of a Gaussian vector $\bm{y}$ is 
\begin{equation}
	h(\bm{y})=\frac{1}{2}\log_2((2\pi)^{2N}\det(\bm{\Sigma_y})), \label{eqn:yentropy}
\end{equation}
where $\bm{\Sigma_y}=\mathbb{E}[\bm{y}\bm{y}^\dagger]$, with $\dagger$ denoting the Hermitian conjugation. Define matrix $\bm{G}=\bm{G}_{11}=\bm{G}_{22}$, the $(n,m)$th entry of which is $g((n-m)\delta T)$, it is easy to see that $\bm{G}$ is a Hermitian matrix. 
Notice that $\bm{G}_{12}^\dagger=\bm{G}_{21}$, thus $\tilde{\bm{G}}$ is a Hermitian matrix. 

The colored Gaussian noise vector $\bm{\eta}$ has the correlation $\mathbb{E}[\eta_i[n]\eta_j[m]]=\sigma_0^2(\bm{G}_{ij})_{n,m}, \quad i, j \in \{1,2\}, n, m\in\{0,1,\dots, N-1\}$. 
As this noise process is a
 stationary, zero mean, colored Gaussian process, the optimal input is also a stationary Gaussian process \cite{cover}. It is also reasonable to assume that data symbols from the two users $\bm{a}_1$ and $\bm{a}_2$ are independent. Then, the covariance matrix of each user is $\mathbb{E}[\bm{a}_k\bm{a}_k^\dagger]=\bm{R}_k, k=1,2$, and the covariance matrix $\bm{\Sigma_y}$ can be written as 
\begin{equation}
	\bm{\Sigma_y} = \tilde{\bm{G}}\left[\begin{matrix}
		\bm{R}_1 & \bm{0} \\ \bm{0} & \bm{R}_2
	\end{matrix}\right]\tilde{\bm{G}}^\dagger + \sigma_0^2\tilde{\bm{G}} \triangleq \tilde{\bm{G}}\tilde{\bm{R}}\tilde{\bm{G}}^\dagger + \sigma_0^2\tilde{\bm{G}}, \label{eq:GR}
\end{equation}
where $\bm{0}$ is an all-zero matrix of size $N\times N$. Then, mutual information expressions for the single-user rate constraints in \eqref{eqn:capregiondef} can be calculated as \begin{align}
&I_N(\bm{a}_1;\bm{y}|\bm{a}_2)= h(\bm{y}_1|\bm{a}_2)-h(\bm{y}_1|\bm{a}_1,\bm{a}_2)\\
&\leq\frac{1}{2N}\log_2\det\left(\mathbb{E}\left[(\bm{G}\bm{a}_1+\bm{\eta}_1)(\bm{G}\bm{a}_1+\bm{\eta}_1)^\dagger\right]\right) \notag\\
&\quad\quad\quad\quad\quad\quad\quad\quad\quad\quad\quad\quad-\frac{1}{2N}\log_2\det\left(\mathbb{E}\left[\bm{\eta}_1\bm{\eta}_1^\dagger\right]\right) \\
&=\frac{1}{2N}\log_2\det\left(\bm{G}\bm{R}_1\bm{G}+\sigma_0^2\bm{G}\right) -\frac{1}{2N}\log_2\det\left(\sigma_0^2\bm{G}\right) \label{eqn:numstable}\\
&=\frac{1}{2N}\log_2\det\left(\bm{I}_{N}+\sigma_0^{-2}\bm{G}\bm{R}_1\right), \label{eqn:R1N}
\end{align} 
and
\begin{equation}
I_N(\bm{a}_2;\bm{y}|\bm{a}_1)=\frac{1}{2N}\log_2\det\left(\bm{I}_{N}+\sigma_0^{-2}\bm{G}\bm{R}_2\right).\label{eqn:R2N}
\end{equation}

\begin{remark}
In order to calculate \eqref{eqn:numstable}, we need the matrix $\bm{G}$ to be invertible. Theoretically, a matrix is invertible as long as it is positive definite. However, this inversion may not be numerically stable. For root raised cosine pulses $p(t)$, numerical stability is achieved if $\delta(1+\beta)\geq1$, where $\beta$ is the roll-off factor \cite{ourpaper}. \label{rem;rem1}
\end{remark}

 Note that the matrices $\bm{G}$, $\bm{G}_{12}$, $\bm{G}_{21}$, $\bm{R}_1$ and $\bm{R}_2$ are all Toeplitz matrices. In addition, comparing \eqref{eqn:genefuncdef} and \eqref{eqn:dataspect}, we observe that
$S_{ak}(-\lambda)$ is the generating function of the matrix $\bm{R}_k$. Since $G_\delta(\lambda)$ in \eqref{eqn:folded} is an even function, $S_{ak}(-\lambda)=S_{ak}(\lambda)$. Then, applying Szeg\"o's theorem \cite{gray} and \cite[Theorem 2]{jesus} on the single-user rate constraints of \eqref{eqn:capregiondef}, we have  
\begin{equation}
  R_{ k}\leq\frac{1}{2}\int_{-\frac{1}{2}}^{\frac{1}{2}}\log_2(1+\sigma_0^{-2}S_{a{ k}}(\lambda)G_\delta(\lambda))d\lambda,~ {k}=1,2.
\end{equation}

 To find the sum-rate constraint, we first observe that $\tilde{\bm{G}}$ and $\tilde{\bm{R}}$ in \eqref{eq:GR} are block Toeplitz matrices \cite{jesus},{ \cite{kimbc}}. Then, we derive the sum-rate constraint in \eqref{eqn:capregiondef} as
 \begin{align}
&R_1+R_2 \notag\\
&\leq\underset{N\rightarrow\infty}{\lim}\frac{1}{2N}\bigg(\log_2\det\left(\mathbb{E}\left[\bm{y}\bm{y}^\dagger\right]\right) -\log_2\det\left(\mathbb{E}\left[\bm{\eta}_1\bm{\eta}_1^\dagger\right]\right) \bigg)\\
&=\underset{N\rightarrow\infty}{\lim}\frac{1}{2N}\log_2\det\left(\bm{I}_{2N}+\sigma_0^{-2}\tilde{\bm{G}}\Tilde{\bm{R}}\right). \label{eqn:sumrateconst}
 \end{align}
In \eqref{eqn:sumrateconst}, $\Tilde{\bm{G}}\Tilde{\bm{R}}$ is a block Toeplitz matrix, because the product of block Toeplitz matrices is also block Toeplitz \cite[Theorem 2]{jesus}. Then, applying \cite[Theorem 6]{jesus} on the sum-rate constraint \eqref{eqn:sumrateconst} we write
 \begin{align}
   &\underset{N\rightarrow\infty}{\lim}I_N(\bm{a}_1,\bm{a}_2;\bm{y}) \notag\\
   &=\frac{1}{2}\int_{-\frac{1}{2}}^{\frac{1}{2}}\log_2\sigma_0^{-2}\left|\begin{matrix}1+S_{a1}(\lambda)G_\delta(\lambda) & S_{a2}(\lambda)G_{12,\delta}(-\lambda) \\ S_{a1}(\lambda)G_{21,\delta}(-\lambda) & 1+S_{a2}(\lambda)G_\delta(\lambda) \end{matrix}\right|d\lambda \\
   &=\frac{1}{2}\int_{-\frac{1}{2}}^{\frac{1}{2}}\log_2\bigg(1+\sigma_0^{-2}S_{a1}(\lambda)G_\delta(\lambda)+\sigma_0^{-2}S_{a2}(\lambda)G_\delta(\lambda) \notag\\
   &\qquad\qquad+\sigma_0^{-4}S_{a1}(\lambda)S_{a2}(\lambda)\left[|G_\delta(\lambda)|^2-\left|G_{12,\delta}(\lambda)\right|^2\right]\bigg) d\lambda,
 \end{align}
where $G_{12,\delta}(\lambda)$ is the generating function of the matrix $\bm{G}_{12}$ obtained via \eqref{eqn:genefuncdef} and written as
\begin{align}
  G_{12,\delta}(\lambda)&=\sum_{n=\infty}^{\infty}g(n\delta T + (\tau_1-\tau_2))e^{j2\pi\lambda n} \\
  &=\frac{1}{\delta T}\sum_{n=-\infty}^{\infty}G\left(\frac{\lambda-n}{\delta T}\right)e^{j2\pi(\tau_1-\tau_2)\frac{\lambda-n}{\delta T}}.
\end{align} Similarly $G_{21,\delta}(\lambda)$ is the generating function of the matrix $\bm{G}_{21}$. It is easy to see that $G_{12,\delta}(\lambda)=\left(G_{21,\delta}(\lambda)\right)^*$ and $|G_{12,\delta}(\lambda)|^2=|G_{12,\delta}(\lambda)|^2$.
\begin{theorem}
  The capacity region of the two-user asynchronous MAC with FTN is given as
\begin{align}
		&C= \underset{{\int_{-\frac{1}{2}}^{\frac{1}{2}} S_{k}(\lambda)d\lambda\leq P_i,~S_{k}(\lambda)\geq0,~ \forall\lambda\in[-\frac{1}{2},\frac{1}{2}], k=1,2}}{\bigcup}\bigg\{(R_1, R_2):\notag \\
 & R_1\leq\frac{1}{2}\int_{-\frac{1}{2}}^{\frac{1}{2}}\log_2(1+\sigma_0^{-2}S_{1}(\lambda))d\lambda \notag \\
		  & R_2\leq\frac{1}{2}\int_{-\frac{1}{2}}^{\frac{1}{2}}\log_2(1+\sigma_0^{-2}S_{2}(\lambda))d\lambda \label{eqn:capregspec}
   \end{align} \begin{align}
		 & R_1+R_2\leq\frac{1}{2}\int_{-\frac{1}{2}}^{\frac{1}{2}}\log_2\bigg(1+\sigma_0^{-2}S_{1}(\lambda)+\sigma_0^{-2}S_{2}(\lambda) \notag \\
  &\quad\quad\quad\quad\quad\quad +\sigma_0^{-4}S_{1}(\lambda)S_{2}(\lambda)\left[1-\left|\frac{G_{12,\delta}(\lambda)}{G_\delta(\lambda)}\right|^2\right]\bigg) d\lambda \bigg\}. \notag 
\end{align} 
\end{theorem}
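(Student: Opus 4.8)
The plan is to evaluate the three asymptotic mutual-information limits appearing in the capacity definition \eqref{eqn:capregiondef} and to show that they coincide with the three integral bounds of \eqref{eqn:capregspec}, while simultaneously confirming that these values are achievable. Since the capacity region is a union over all admissible power spectra, it suffices to fix an arbitrary pair of input spectra and establish the three per-user rate limits; the union over the common power constraint $\int_{-1/2}^{1/2}S_k(\lambda)\,d\lambda\le P_k$ then assembles the region directly.

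First I would settle achievability. Because the matched-filtered noise $\bm{\eta}$ is a stationary, zero-mean, colored Gaussian process, the maximum-entropy principle guarantees that a stationary Gaussian input simultaneously maximizes each mutual-information term \cite{cover}. Consequently every inequality of the form $h(\cdot)\le\tfrac12\log_2\det(\cdots)$ used to pass from the entropy differences to \eqref{eqn:R1N}, \eqref{eqn:R2N}, and \eqref{eqn:sumrateconst} is met with equality for the optimal Gaussian input, so the finite-$N$ per-symbol mutual informations equal the normalized log-determinants exactly.

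Second I would dispatch the two single-user constraints. The finite-$N$ expression is $\tfrac{1}{2N}\log_2\det(\bm{I}_N+\sigma_0^{-2}\bm{G}\bm{R}_k)$, where both $\bm{G}$ and $\bm{R}_k$ are Hermitian Toeplitz with generating functions $G_\delta(\lambda)$ and $S_{ak}(\lambda)$, the latter identified from \eqref{eqn:dataspect}. Invoking Szeg\"o's theorem \cite{gray} together with \cite[Theorem 2]{jesus} on the product $\bm{G}\bm{R}_k$, the normalized log-determinant converges to $\tfrac12\int_{-1/2}^{1/2}\log_2\big(1+\sigma_0^{-2}G_\delta(\lambda)S_{ak}(\lambda)\big)\,d\lambda$. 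Substituting the FTN input-spectrum relation $S_k(\lambda)\propto G_\delta(\lambda)S_{ak}(\lambda)$ then yields the first two lines of \eqref{eqn:capregspec}.

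Third---and this is where the main effort lies---I would treat the sum-rate constraint. Here the matrices $\tilde{\bm{G}}$ and $\tilde{\bm{R}}$ are $2\times 2$ block Toeplitz, so the scalar Szeg\"o result no longer applies and one must instead invoke the block generalization \cite[Theorem 6]{jesus}. The key step is to identify the matrix-valued generating function of $\sigma_0^{-2}\tilde{\bm{G}}\tilde{\bm{R}}$ and to verify that the determinant of its $2\times 2$ symbol is exactly the determinant displayed just before the theorem statement; expanding that determinant and using that $G_\delta(\lambda)$ is real, so $|G_\delta(\lambda)|^2=G_\delta(\lambda)^2$, recovers the coupling factor $1-|G_{12,\delta}(\lambda)/G_\delta(\lambda)|^2$ after the same input-spectrum substitution. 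The principal obstacle is checking the hypotheses of the block-Toeplitz limit theorem---that $\tilde{\bm{G}}\tilde{\bm{R}}$ is genuinely block Toeplitz (which follows since products of block Toeplitz matrices are block Toeplitz by \cite[Theorem 2]{jesus}) and that its symbol is bounded and integrable so that the determinant limit is valid---after which the three bounds, being achievable by Gaussian inputs, combine under the power constraint to give the claimed region $C$.
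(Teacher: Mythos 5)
Your proposal is correct and follows essentially the same route as the paper: Gaussian inputs justified by the stationary colored Gaussian noise \cite{cover}, Szeg\"o's theorem with \cite[Theorem 2]{jesus} for the two single-user constraints, the block-Toeplitz limit \cite[Theorem 6]{jesus} applied to $\tilde{\bm{G}}\tilde{\bm{R}}$ for the sum rate, and the final substitution $S_k(\lambda)\propto G_\delta(\lambda)S_{ak}(\lambda)$ with $G_\delta$ real to produce the coupling factor $1-\left|G_{12,\delta}(\lambda)/G_\delta(\lambda)\right|^2$. The only differences are cosmetic: you spell out the achievability bookkeeping and the boundedness/integrability hypotheses of the block-Toeplitz theorem, which the paper leaves implicit.
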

\begin{remark} When $\delta=1$, \eqref{eqn:capregspec} reduces to \cite[Theorem 1]{ganji} for orthogonal signaling.
\end{remark}
{  
\begin{remark}
  When $\delta=1/(1+\beta)$, $|G_{12,\delta}(\lambda)|^2=|G_{\delta}(\lambda)|^2$ and \eqref{eqn:capregspec} reduces to the capacity region of synchronous MAC with FTN, regardless of the time difference. In other words, with fast enough FTN, asynchronous transmission loses its meaning as there is no spectral aliasing in $G_{12,\delta}(\lambda)$. For the optimal power allocation, users perform spectrum shaping for point-to-point FTN as discussed in \cite{ourpaper}. 
  \label{remark3} 
  \end{remark}
  \begin{remark}
When $\delta>1/(1+\beta)$, as in Remark~\ref{remark3}, the input distribution achieving the individual rate upper bound in \eqref{eqn:capregspec} has a covariance matrix $\bm{G}^{-1}$ scaled according to the power constraint \cite{ourpaper}. However, this same distribution does not achieve the sum-rate upper bound. Therefore, we conduct power optimization similar to \cite{verdu} to obtain the capacity region, and find that it is smooth, and there are no sharp corners as in synchronous MAC \cite{cover}. \label{remark4}
\end{remark}}
\vspace{-0.4cm}
\section{An Alternative Capacity Calculation}
The capacity region of the asynchronous MAC with finite memory is defined as \cite{verdu} 
\begin{equation}
C=\text{closure}\left(\underset{N\rightarrow\infty}{\liminf}~C_N\right). \label{eqn:detregdef}
\end{equation} Here $C_N$ is the achievable region for $N$ symbols defined as 
\begin{equation}
	\begin{aligned}
		C_N= \underset{p(\bm{a}_1)p(\bm{a}_2)}{\bigcup}\bigg\{(R_1, R_2): & 0\leq R_1\leq I(\bm{a}_1;\bm{y}| \bm{a}_2) \\
		 & 0\leq R_2\leq I(\bm{a}_2;\bm{y}| \bm{a}_1) \\
		 & 0\leq R_1+R_2\leq I(\bm{a}_1, \bm{a}_2;\bm{y}) \bigg\},
	\end{aligned}\label{eqn:CNmutual}
\end{equation}
 where $p(\bm{a}_1)$ and $p(\bm{a}_2)$ mean the distribution of $\bm{a}_1$ and $\bm{a}_2$ respectively. Capacity for an arbitrary MAC with infinite memory cannot be defined in general. However, we will show that this same expression is valid as long as the limit exists \cite{cover,brandenburg}. Therefore, in this section we calculate \eqref{eqn:detregdef} and prove that it is equal to the capacity region in \eqref{eqn:capregspec}. 

 By combining \eqref{eqn:R1N}, \eqref{eqn:R2N} and \eqref{eqn:sumrateconst}, $C_N$ for the asynchronous MAC with FTN can be written as 
 \begin{equation}
	\begin{aligned}
		C_N= \underset{\substack{\frac{1}{N\delta T}\text{tr}(\bm{GR}_k)\leq P_k \\ \bm{R}_k\succeq0, k=1,2 }}{\bigcup}\bigg\{(R_1, R_2):  R_1\leq\frac{1}{2N}\log_2\left|\bm{I}_{N}+\sigma_0^{-2}\bm{G}\bm{R}_1\right| \\
		  R_2\leq\frac{1}{2N}\log_2\left|\bm{I}_{N}+\sigma_0^{-2}\bm{G}\bm{R}_2\right| \\
		  R_1+R_2\leq \frac{1}{2N}\log_2\left|\bm{I}_{2N}+\sigma_0^{-2}\tilde{\bm{G}}\tilde{\bm{R}}\right| \bigg\}.
	\end{aligned}\label{eqn:CNdet}
\end{equation}

In order to further push the sum-rate upper bound, we suggest the novel derivation in 
 \eqref{eqn:manipulation0}-\eqref{eqn:manipulation5}. In order for \eqref{eqn:manipulation0} to be computable, we need Remark \ref{rem;rem1} to be valid. 
In step (a), we define $\bm{\Phi}\triangleq\bm{G}^{-\frac{1}{2}}\bm{G}_{12}\bm{G}^{-\frac{1}{2}}$, $\bm{\Psi}_1\triangleq\bm{G}^{\frac{1}{2}}\bm{R}_{1}\bm{G}^{\frac{1}{2}}$ and $\bm{\Psi}_2\triangleq\bm{G}^{\frac{1}{2}}\bm{R}_{2}\bm{G}^{\frac{1}{2}}$, where $\bm{\Psi}_1$ and $\bm{\Psi}_2$ are Hermitian matrices. In (b), we perform singular value decomposition on the matrix $\bm{\Phi}=\bm{U_\Phi\Lambda_\Phi V_\Phi}^\dagger$, where $\bm{\Lambda_\Phi}=\text{diag}\{\lambda_1, \lambda_2, \dots, \lambda_N\}$ is a diagonal matrix and $\lambda_i, i=1,\dots,N$ are the singular values of $\bm{\Phi}$. In (c) we define $\tilde{\bm{\Psi}}_1\triangleq\bm{U_\Phi}^\dagger\bm{\Psi}_1\bm{U_\Phi}$ and $\tilde{\bm{\Psi}}_2\triangleq\bm{V_\Phi}^\dagger\bm{\Psi}_2\bm{V_\Phi}$, where $\psi_{1i}$ and $\psi_{2i}$ are the diagonal entries of $\tilde{\bm{\Psi}}_1$ and $\tilde{\bm{\Psi}}_2$.
\begin{figure*}[t]
\begin{align}
		I(\bm{a}_1,\bm{a}_2;\bm{y})
		&=\frac{1}{2N}\log_2\det\left(\bm{I}+\sigma_0^{-2}\left[\begin{matrix}
			\bm{G}^{\frac{1}{2}} & \bm{0} \\ \bm{0} & \bm{G}^{\frac{1}{2}}
		\end{matrix}\right]
		\left[\begin{matrix}
			\bm{I} & \bm{G}^{-\frac{1}{2}}\bm{G}_{12}\bm{G}^{-\frac{1}{2}} \\ \bm{G}^{-\frac{1}{2}}\bm{G}_{21}\bm{G}^{-\frac{1}{2}} & \bm{I}
		\end{matrix}\right]
		\left[\begin{matrix}
			\bm{G}^{\frac{1}{2}} & \bm{0} \\ \bm{0} & \bm{G}^{\frac{1}{2}}
		\end{matrix}\right]
	\left[\begin{matrix}
			\bm{R}_{1} & \bm{0} \\ \bm{0} & \bm{R}_{2} 
		\end{matrix}\right]\right) \label{eqn:manipulation0}\\
		&=\frac{1}{2N}\log_2\det\left(\bm{I}+\sigma_0^{-2}\left[\begin{matrix}
			\bm{I} & \bm{G}^{-\frac{1}{2}}\bm{G}_{12}\bm{G}^{-\frac{1}{2}} \\ \bm{G}^{-\frac{1}{2}}\bm{G}_{21}\bm{G}^{-\frac{1}{2}} & \bm{I}
		\end{matrix}\right]
	\left[\begin{matrix}
			\bm{G}^{\frac{1}{2}}\bm{R}_{1}\bm{G}^{\frac{1}{2}} & \bm{0} \\ \bm{0} & \bm{G}^{\frac{1}{2}}\bm{R}_{2}\bm{G}^{\frac{1}{2}} 
		\end{matrix}\right]\right) \label{eqn:manipulation1}\\
	&\overset{(a)}{=}\frac{1}{2N}\log_2\det\left(\bm{I}+\sigma_0^{-2}\left[\begin{matrix}
			\bm{I} & \bm{\Phi} \\ \bm{\Phi}^\dagger & \bm{I}
		\end{matrix}\right]
	\left[\begin{matrix}
			\bm{\Psi}_1 & \bm{0} \\ \bm{0} & \bm{\Psi}_2 
		\end{matrix}\right]\right) \label{eqn:manipulation2}\\
	&\overset{(b)}{=} \frac{1}{2N}\log_2\det\left(\bm{I}+\sigma_0^{-2}\left[\begin{matrix}
		\bm{U_\Phi} & \bm{0} \\ \bm{0} & \bm{V_\Phi}
	\end{matrix}\right]
	\left[\begin{matrix}
		\bm{I} & \bm{\Lambda_\Phi} \\ \bm{\Lambda_\Phi}^\dagger & \bm{I}
	\end{matrix}\right]
\left[\begin{matrix}
	\bm{U_\Phi}^\dagger & \bm{0} \\ \bm{0} & \bm{V_\Phi}^\dagger
\end{matrix}\right]
	\left[\begin{matrix}
		\bm{\Psi}_1 & \bm{0} \\ \bm{0} & \bm{\Psi}_2 
	\end{matrix}\right]\right) \label{eqn:manipulation3} \\
		&=\frac{1}{2N}\log_2\det\left(\bm{I}+\sigma_0^{-2}
		\left[\begin{matrix}
			\bm{I} & \bm{\Lambda_\Phi} \\ \bm{\Lambda_\Phi}^\dagger & \bm{I}
		\end{matrix}\right]
\left[\begin{matrix}
	\bm{U_\Phi}^\dagger\bm{\Psi}_1\bm{U_\Phi} & \bm{0} \\ \bm{0} & \bm{V_\Phi}^\dagger\bm{\Psi}_2\bm{V_\Phi} 
\end{matrix}\right]\right) \label{eqn:manipulation4} \\
	&\overset{(c)}{=}\frac{1}{2N}\log_2\det\left(\bm{I}+\sigma_0^{-2}
\left[\begin{matrix}
	\bm{I} & \bm{\Lambda_\Phi} \\ \bm{\Lambda_\Phi}^\dagger & \bm{I}
\end{matrix}\right]
\left[\begin{matrix}
	\tilde{\bm{\Psi}}_1 & \bm{0} \\ \bm{0} & \tilde{\bm{\Psi}}_2
\end{matrix}\right]\right) \label{eqn:manipulation5} .
\end{align}
\end{figure*}
Then, we apply \cite[Lemma 2]{verdu} on \eqref{eqn:manipulation5} to upper bound the mutual information as
\begin{align}
	&I(\bm{a}_1,\bm{a}_2;\bm{y})=\frac{1}{2N}\log_2\left(\bm{I}+\sigma_0^{-2}
	\left[\begin{matrix}
		\bm{I} & \bm{\Lambda_\Phi} \\ \bm{\Lambda_\Phi}^\dagger & \bm{I}
	\end{matrix}\right]
	\left[\begin{matrix}
		\tilde{\bm{\Psi}}_1 & \bm{0} \\ \bm{0} & \tilde{\bm{\Psi}}_2
	\end{matrix}\right]\right) \notag \\
	&\leq \frac{1}{2N}\sum_{i=0}^{N-1}\log_2\left(1+\frac{\psi_{1i}}{\sigma_0^2}+\frac{\psi_{2i}}{\sigma_0^2}+\frac{\psi_{1i}\psi_{2i}}{\sigma_0^4}(1-|\lambda_i|^2)\right). \label{eqn:infoub}
\end{align} 
 The equality in \eqref{eqn:infoub} is achieved when $\tilde{\bm{\Psi}}_1$ and $\tilde{\bm{\Psi}}_2$ are diagonal matrices. Moreover, in order for \cite[Lemma 2]{verdu} to be valid or the upper bound to be achieved, we need the complex scalars $\lambda_i$ to satisfy $|\lambda_i|\leq1, i=0,1,...,N-1$. 
 Let $\bm{v}=[v[0],v[1],\dots,v[2N-1]]^T$ be a non-zero vector, where $[v[0],...,v[N-1]]=[a_1[0],...,a_1[N-1]]$ and $[v[N],...,v[2N-1]] =[a_2[1],...,a_2[N]]$. Then, the quadratic form $\bm{v}^\dagger\tilde{\bm{G}}\bm{v}$ is the energy of signal $x_1(t)+x_2(t)$, \cite{verdu}. Therefore, as long as $\bm{v}$ is not zero, $\bm{v}^\dagger\tilde{\bm{G}}\bm{v}$ will be greater than zero. Thus, we conclude that the matrix $\tilde{\bm{G}}$ is positive definite.
 Then, we have \useshortskip
\begin{align}
	\bm{v}^\dagger\tilde{\bm{G}}\bm{v}&\overset{(a)}{=}\tilde{\bm{v}}^\dagger\left[\begin{matrix}
		\bm{G}^{\frac{1}{2}} & \bm{0} \\ \bm{0} & \bm{G}^{\frac{1}{2}}
	\end{matrix}\right]^\dagger\tilde{\bm{G}}\left[\begin{matrix}
	\bm{G}^{\frac{1}{2}} & \bm{0} \\ \bm{0} & \bm{G}^{\frac{1}{2}}
\end{matrix}\right]\tilde{\bm{v}} \notag\\
&=\tilde{\bm{v}}^\dagger\left[\begin{matrix}
	\bm{I} & \bm{\Phi} \notag\\ \bm{\Phi}^\dagger & \bm{I}
\end{matrix}\right]\tilde{\bm{v}} >0, \notag
\end{align} where (a) is because $\bm{v}\triangleq\left[\begin{matrix}
\bm{G}^{\frac{1}{2}} & \bm{0} \\ \bm{0} & \bm{G}^{\frac{1}{2}}
\end{matrix}\right]\tilde{\bm{v}}$. Thus, the matrix $\left[\begin{matrix}
\bm{I} & \bm{\Phi} \notag\\ \bm{\Phi}^\dagger & \bm{I}
\end{matrix}\right]$ is positive definite as well. Then, according to \cite{matrices}, $|\lambda_i|<1, i=0,1,...,N-1$. 

Next, the upper bound for $I(\bm{a}_1;\bm{y}|\bm{a}_2)$ and $I(\bm{a}_2;\bm{y}|\bm{a}_1)$ can be obtained as in \cite{verdu} as
\begin{align}
	I(\bm{a}_1;\bm{y}|\bm{a}_2)&\leq \frac{1}{2N}\sum_{i=0}^{N-1}\log_2\big(1+\frac{\psi_{1i}}{\sigma_0^2}\big) \label{eqn:detsingleub} \\
	I(\bm{a}_2;\bm{y}|\bm{a}_1)&\leq \frac{1}{2N}\sum_{i=0}^{N-1}\log_2\big(1+\frac{\psi_{2i}}{\sigma_0^2}\big). \label{eqn:detsingleub2}
\end{align}
The power constraint for each user in this $N$-block asynchronous multiple access channel with FTN is calculated as $\frac{1}{N\delta T}\text{tr}(\bm{G}\bm{R}_k)$, $k=1,2$, where \useshortskip
\begin{align}
	\text{tr}(\bm{G}\bm{R}_k)&=\text{tr}(\bm{G}^{\frac{1}{2}}\bm{R}_k\bm{G}^{\frac{1}{2}})=\sum_{i=0}^{N-1}\psi_{ki} \leq N \delta T P_k , \label{eqn:detconstraint}
\end{align} and $\psi_{ki} \geq 0$, $i = 1,\ldots, N$. Therefore, the region $C_N$ in \eqref{eqn:CNdet} is obtained using \eqref{eqn:infoub}-\eqref{eqn:detconstraint}.

\begin{lemma}
  If we have $\sum_{n=-\infty}^{\infty}|n|t_n<\infty$, then the discrete Fourier transform (DFT) vectors are asymptotically the eigenvectors of Toeplitz matrix $\bm{T}_N$.
\label{lem:sqrtGasytoep}
\end{lemma}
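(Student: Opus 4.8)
The plan is to first make the phrase ``asymptotically the eigenvectors'' precise, and then verify it by a direct truncation estimate. Let $\bm{f}_k$ denote the $k$th normalized DFT vector, $(\bm{f}_k)_m=\frac{1}{\sqrt{N}}e^{j2\pi km/N}$, $m=0,\dots,N-1$, and set $\lambda_k=\mathcal{G}(\bm{T}_N)|_{\lambda=-k/N}=\sum_{p=-\infty}^{\infty}t_p e^{-j2\pi kp/N}$, the generating function sampled at the matching frequency. The statement I would establish is $\|\bm{T}_N\bm{f}_k-\lambda_k\bm{f}_k\|_2\to0$ as $N\to\infty$, uniformly in $k$; that is, each DFT vector is an approximate eigenvector with eigenvalue obtained by sampling $\mathcal{G}(\bm{T}_N)$, which is the precise sense in which the DFT diagonalizes $\bm{T}_N$ in the limit.

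First I would compute $\bm{T}_N\bm{f}_k$ entrywise. Using $(\bm{T}_N)_{n,m}=t_{n-m}$ and substituting $p=n-m$ gives $(\bm{T}_N\bm{f}_k)_n=(\bm{f}_k)_n\sum_{p=n-N+1}^{n}t_p e^{-j2\pi kp/N}$. Subtracting $\lambda_k(\bm{f}_k)_n$ leaves exactly the part of the generating series lying outside the sliding window $[n-N+1,n]$, so the $n$th error component is bounded by $\frac{1}{\sqrt{N}}E_n$ with $E_n=\sum_{p>n}|t_p|+\sum_{p\le n-N}|t_p|$. This identifies the obstruction to being an exact eigenvector as a pure boundary effect: for indices $n$ in the bulk the excluded tails involve only large $|p|$ and are negligible, whereas near the two ends of the block the tails are heavy.

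The key step is to control $\|\bm{T}_N\bm{f}_k-\lambda_k\bm{f}_k\|_2^2\le\frac{1}{N}\sum_{n=0}^{N-1}E_n^2$, and here the hypothesis $\sum_n|n|\,|t_n|<\infty$ does the essential work. Summing the right tail over $n$ gives $\sum_{n\ge0}\sum_{p>n}|t_p|=\sum_{p\ge1}p\,|t_p|<\infty$, and symmetrically for the left tail, so each tail sequence is nonnegative, bounded, and summable with a bound independent of $N$, hence square summable with $\sum_{n=0}^{N-1}E_n^2\le C$ for a constant $C$. Dividing by $N$ then yields $\|\bm{T}_N\bm{f}_k-\lambda_k\bm{f}_k\|_2^2\le C/N\to0$. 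I would emphasize that mere absolute summability $\sum|t_n|<\infty$ only bounds each $E_n$ by a constant and does not force $\sum_n E_n^2=O(1)$; the $|n|$-weighting is exactly what tames the edge terms, which is why it is the stated hypothesis.

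I expect the boundary terms to be the main obstacle, both technically and conceptually, since they are the sole reason the DFT fails to diagonalize a finite Toeplitz block exactly. An equivalent and perhaps cleaner route, which I would mention as the conceptual backbone, is to introduce the circulant matrix $\bm{C}_N$ built from the same sequence, which the DFT vectors diagonalize exactly, and to prove the asymptotic equivalence $\frac{1}{\sqrt{N}}\|\bm{T}_N-\bm{C}_N\|_F\to0$ in Gray's weak norm \cite{gray}; the same $|n|$-weighted summability bounds the wrap-around corrections that distinguish $\bm{T}_N$ from $\bm{C}_N$, and the eigenvector claim for $\bm{T}_N$ then follows by transferring the exact spectral decomposition of $\bm{C}_N$ through the equivalence.
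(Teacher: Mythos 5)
Your proof is correct, but there is nothing in the paper to compare it against: the paper's ``proof'' of this lemma is only a pointer, stating that the result is discussed but not proved in \cite{therrien} and deferring the actual argument to the extended version \cite{zhang2023capacity}. Your direct truncation argument is sound and self-contained. The identity $(\bm{T}_N\bm{f}_k)_n=(\bm{f}_k)_n\sum_{p=n-N+1}^{n}t_pe^{-j2\pi kp/N}$ is right; the error against $\lambda_k=\mathcal{G}(\bm{T}_N)\big|_{\lambda=-k/N}$ is exactly the two excluded tails; and the interchange $\sum_{n\ge0}\sum_{p>n}|t_p|=\sum_{p\ge1}p\,|t_p|$ (plus its mirror for the left tail), combined with $E_n\le\sum_p|t_p|$, gives $\sum_{n=0}^{N-1}E_n^2\le C$ with $C$ independent of $N$, hence $\|\bm{T}_N\bm{f}_k-\lambda_k\bm{f}_k\|_2^2\le C/N$ uniformly in $k$. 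Two minor caveats. First, your remark that the $|n|$-weighted hypothesis is exactly what the result needs is overstated: plain absolute summability $\sum_n|t_n|<\infty$ already yields $\|\bm{T}_N\bm{f}_k-\lambda_k\bm{f}_k\|_2\to0$, because $E_n$ is bounded and tends to zero away from the block edges, so its Ces\`aro mean $\frac{1}{N}\sum_n E_n^2$ vanishes; what the $|n|$-weighting buys is the explicit $O(1/\sqrt{N})$ rate, uniform in $k$, which is what your $\sum_n E_n^2=O(1)$ bound encodes. Second, the circulant detour you sketch at the end would need care if promoted to a full proof: weak-norm (normalized Frobenius) equivalence $\frac{1}{\sqrt{N}}\|\bm{T}_N-\bm{C}_N\|_F\to0$ only controls $\frac{1}{N}\sum_k\|(\bm{T}_N-\bm{C}_N)\bm{f}_k\|_2^2$, i.e., the eigenvector error averaged over $k$, so transferring the exact spectral decomposition of $\bm{C}_N$ to a statement about every individual DFT vector still requires an entrywise or operator-norm estimate of the kind your main argument already supplies. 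Neither caveat affects the validity of your primary proof, which legitimately fills a gap the paper leaves to an external reference.
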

\begin{proof}
  Although this result is discussed in \cite{therrien}, it is not proved. 
  We refer the readers to \cite{zhang2023capacity} for the detailed proof of this lemma.
\end{proof}
\begin{corollary}
  The region defined in \eqref{eqn:detregdef} with $C_N$ defined using \eqref{eqn:infoub}, \eqref{eqn:detsingleub}, and \eqref{eqn:detsingleub2} with the power constraint in \eqref{eqn:detconstraint} is the same as the capacity region in \eqref{eqn:capregspec}.
\end{corollary}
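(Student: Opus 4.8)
The plan is to show that both the finite-memory region \eqref{eqn:detregdef} and the frequency-domain region \eqref{eqn:capregspec} are unions, over feasible power allocations, of \emph{the same} rate constraints, and that the finite-$N$ (matrix) constraints converge to the continuous (spectral) ones as $N\to\infty$. The bridge between the two descriptions is the asymptotic eigenstructure of the Toeplitz matrices involved, supplied by Lemma~\ref{lem:sqrtGasytoep}: since $\bm{G}$, $\bm{G}_{12}$, $\bm{G}_{21}$ and $\bm{R}_k$ are Toeplitz with generating sequences satisfying $\sum_n|n|\,|t_n|<\infty$, the normalized DFT vectors are asymptotically their common eigenvectors. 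First I would verify this summability hypothesis for each matrix so that Lemma~\ref{lem:sqrtGasytoep} applies uniformly, using Remark~\ref{rem;rem1} to guarantee that $\bm{G}\succ0$ and hence that $\bm{G}^{\pm1/2}$ and the reduction \eqref{eqn:CNdet}--\eqref{eqn:manipulation5} are well defined.

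Next I would translate the discrete quantities in \eqref{eqn:infoub}--\eqref{eqn:detconstraint} into spectral ones. Because the DFT vectors asymptotically diagonalize every Toeplitz factor \emph{simultaneously}, the derived matrices $\bm{\Phi}=\bm{G}^{-1/2}\bm{G}_{12}\bm{G}^{-1/2}$ and $\bm{\Psi}_k=\bm{G}^{1/2}\bm{R}_k\bm{G}^{1/2}$ become asymptotically diagonal in the same basis, and their eigenvalue profiles converge, along the $i$th DFT frequency $\lambda$, to $|\lambda_i|\to|G_{12,\delta}(\lambda)/G_\delta(\lambda)|$ and $\psi_{ki}\to S_k(\lambda)$ in the paper's normalization. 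Two consequences follow. Since $\bm{U_\Phi}$ and $\bm{V_\Phi}$ asymptotically coincide with the DFT matrix that also diagonalizes $\bm{\Psi}_k$, the matrices $\tilde{\bm{\Psi}}_k$ are asymptotically diagonal, so the upper bound \eqref{eqn:infoub} is asymptotically tight; and each normalized sum in \eqref{eqn:infoub}, \eqref{eqn:detsingleub}, \eqref{eqn:detsingleub2} is a Riemann sum converging to the corresponding integral, while the trace constraint \eqref{eqn:detconstraint}, $\frac{1}{N\delta T}\mathrm{tr}(\bm{G}\bm{R}_k)=\frac{1}{N\delta T}\sum_i\psi_{ki}\le P_k$, converges to $\int_{-1/2}^{1/2}S_k(\lambda)\,d\lambda\le P_k$. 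Matching integrands then reproduces exactly the three bounds and the power constraint of \eqref{eqn:capregspec}.

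With the constraints identified, I would prove the set equality by two inclusions. For the inclusion of \eqref{eqn:capregspec} in $\mathrm{closure}(\liminf_N C_N)$, given any admissible spectral pair $(S_1,S_2)$ I would construct Toeplitz (or circulant) covariances $\bm{R}_k$ whose generating functions match $S_{ak}$, so that the induced $\psi_{ki}$ approximate $S_k(\lambda_i)$ and the achievable rates converge to the target integrals; hence every point strictly interior to \eqref{eqn:capregspec} lies in $C_N$ for all large $N$, and the closure absorbs the boundary. For the reverse inclusion, any sequence of rate points in $C_N$ comes with covariances obeying \eqref{eqn:detconstraint}, and applying Szeg\"o's theorem (equivalently, the convergence above) shows their rates are asymptotically dominated by the integral bounds of \eqref{eqn:capregspec}, so every accumulation point lies in that region.

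The main obstacle is the \emph{joint} equidistribution step rather than any single convergence. The sum-rate integrand couples $\lambda_i$, $\psi_{1i}$ and $\psi_{2i}$ at the same index, so marginal eigenvalue distributions do not suffice; I must show that $\bm{\Phi}$, $\bm{\Psi}_1$ and $\bm{\Psi}_2$ are diagonalized by one common asymptotic DFT basis, and that this property survives the nonlinear operation $\bm{G}^{-1/2}$ and the matrix products defining $\bm{\Phi}$ and $\bm{\Psi}_k$. This goes beyond the single-matrix statement of Lemma~\ref{lem:sqrtGasytoep} and requires the theory of asymptotically equivalent matrix sequences: that a function of a Toeplitz matrix is asymptotically equivalent to the Toeplitz matrix generated by the same function of the symbol, and that products and inverses preserve this equivalence and the common eigenbasis. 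Controlling the vanishing of the off-diagonal entries of $\tilde{\bm{\Psi}}_k$, so that \eqref{eqn:infoub} is genuinely tight in the limit rather than merely an inequality, is the technical heart of the argument.
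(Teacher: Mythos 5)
Your proposal follows the same backbone as the paper's proof---Lemma~\ref{lem:sqrtGasytoep}, asymptotically Toeplitz (AT) algebra, eigenvalue equidistribution, and then the time/frequency region comparison of \cite{verdu}---but it differs in one structural choice that makes your route considerably heavier than it needs to be. You propose to prove achievability by constructing Toeplitz (circulant) covariances $\bm{R}_k$ matching target spectra and then showing that $\bm{\Phi}$, $\bm{\Psi}_1$, $\bm{\Psi}_2$ are \emph{simultaneously} asymptotically diagonalized, so that the off-diagonal entries of $\tilde{\bm{\Psi}}_k$ vanish and \eqref{eqn:infoub} becomes tight only in the limit; you correctly flag this joint-diagonalization step as the technical heart of your argument. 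The paper's proof needs none of this, because in $C_N$ the covariances $\bm{R}_k$ are free positive semidefinite optimization variables, not required to be Toeplitz: for any desired nonnegative diagonal values $\{\psi_{ki}\}$ one can set $\bm{R}_k=\bm{G}^{-\frac{1}{2}}\bm{U_\Phi}\,\mathrm{diag}\{\psi_{ki}\}\,\bm{U_\Phi}^\dagger\bm{G}^{-\frac{1}{2}}$ (and analogously with $\bm{V_\Phi}$ for user 2), which makes $\tilde{\bm{\Psi}}_k$ \emph{exactly} diagonal at every finite $N$, so equality in \eqref{eqn:infoub}--\eqref{eqn:detsingleub2} holds without any asymptotics. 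Consequently $C_N$ is exactly the region parameterized by $\{\psi_{ki}\geq 0,\ \sum_i\psi_{ki}\leq N\delta T P_k\}$ and $\{|\lambda_i|\}$, and the only quantity requiring Toeplitz asymptotics is the channel-dependent sequence $|\lambda_i|$: the paper applies Lemma~\ref{lem:sqrtGasytoep} and the AT product theorem only to the \emph{Hermitian} matrix $\bm{\Phi}^\dagger\bm{\Phi}$, whose eigenvalues equidistribute as samples of $\left|G_{12,\delta}(\lambda)/G_\delta(\lambda)\right|^2$, after which the discrete optimization over $\{\psi_{ki}\}$ converges to the spectral optimization exactly as in \cite{verdu}. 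This decoupling also sidesteps the SVD phase ambiguities you would face in aligning $\bm{U_\Phi},\bm{V_\Phi}$ with a common DFT basis for the non-Hermitian $\bm{\Phi}$. A second, smaller point: in your converse inclusion you invoke Szeg\"o's theorem on the rates of arbitrary points of $C_N$, but the covariances there need not be Toeplitz, so Szeg\"o does not apply to them directly; the correct (and paper's) route is to first pass to the $\psi$-parameterized upper bounds \eqref{eqn:infoub}--\eqref{eqn:detsingleub2}, which hold for all admissible $\bm{R}_k$, and only then take limits using the equidistribution of $|\lambda_i|$. Your plan can be completed, but it proves more than is required; the paper's parameterization eliminates your hardest step.
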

\begin{proof}
Since the raised cosine filter $g[n]$ satisfies $\sum_{n=-\infty}^{\infty}|n|g[n]<\infty$, by Lemma \ref{lem:sqrtGasytoep}, $\bm{G}^{-\frac{1}{2}}$ is an asymptotically Toeplitz (AT) matrix. Its generating function is $G_\delta^{\frac{1}{2}}(\lambda)$. We know that $|\lambda_i|^2$'s are the eigenvalues of the Hermitian matrix $\bm{\Phi}^\dagger\bm{\Phi}$. As $\bm{G}^{-\frac{1}{2}}$ is AT and the product of AT matrices is also AT \cite[Theorem 5.3]{gray}, the product $\bm{\Phi}^\dagger\bm{\Phi}=\bm{G}^{-\frac{1}{2}}\bm{G}_{12}^\dagger\bm{G}^{-1}\bm{G}_{12}\bm{G}^{-\frac{1}{2}}$ is also AT. The generating function of $\bm{\Phi}^\dagger\bm{\Phi}$ is $\left|\frac{G_{12,\delta}(\lambda)}{G_\delta(\lambda)}\right|^2$, which is the product of the generating functions of the individual matrices in the above $\bm{\Phi}^\dagger\bm{\Phi}$ expansion.
The eigenvalues of a Toeplitz matrix asymptotically approximate the samples of its generating function \cite{kimproperty}. Thus we have $|\lambda_i|^2\approx\left|\frac{G_{12,\delta}(i/N)}{G_\delta(i/N)}\right|^2, i=0,1,\dots,N-1$. Moreover, the values $|\lambda_i|^2$ are the samples from the constant spectrum $\left|\frac{G_{12,\delta}(\lambda)}{G_\delta(\lambda)}\right|^2$. Hence the discussion in \cite{verdu} about time and frequency domain capacity region comparison applies, and we conclude that the two regions are the same. 
 \end{proof} 

\begin{figure}[t]
	\includegraphics[scale=0.368]{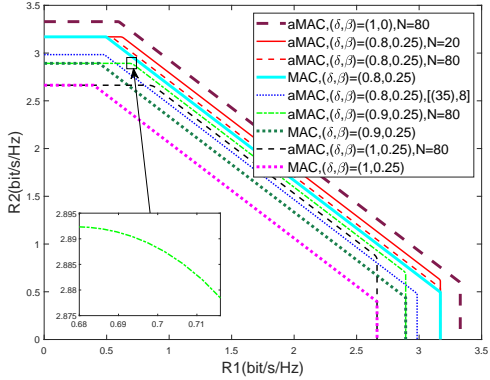}
	\caption{Capacity regions for asynchronous MAC with different $(\delta, \beta)$ pairs, { the bound in \cite[(35)]{outperform} for iid inputs}, and synchronous MAC.}
	\label{fig:fourinone}
\end{figure}

\vspace{-2mm}
\section{Numerical Results}
In this section, we plot the capacity region { \eqref{eqn:infoub}-\eqref{eqn:detconstraint}.} for root raised cosine pulses $p(t)$. We set the SNR ($P_k/\sigma_0^2$) for both users to be $20$ dB, and the signaling period $T=1$. 

{ In Fig. \ref{fig:fourinone}, in the legend, aMAC and MAC respectively mean asynchronous and synchronous transmission. When $\delta = 1$, there is Nyquist transmission, and if $\delta<1$, FTN is utilized. The $N$ values used to obtain the curves are also indicated in the legend. The curves, for which there is no $N$, can be plotted independent of $N$; i.e., for $N$ approaches infinity. For all the asynchronous simulations in this figure, except \cite[(35)]{outperform}, we set the time difference $\tau=\frac{\delta T}{2}$. In the figure, we plot MAC, $(\delta,\beta)= (1,0.25)$ as a baseline, and aMAC, $(\delta,\beta)= (1,0)$ as an upper bound. Next, we present results to reveal the gains due to optimal power allocation, FTN, and asynchronism.

In Fig.~\ref{fig:fourinone}, first we confirm that as $N$ increases from 20 to 80, the aMAC, $(\delta,\beta)= (0.8,0.25)$ curve converges to the MAC, $(\delta,\beta)= (0.8,0.25)$ curve as explained in Remark~\ref{remark3}. Furthermore, as explained in Remark~\ref{remark4} and shown in the zoomed in section, we have a smooth corner for aMAC, $(\delta,\beta)=(0.9,0.25)$. %
When we compare MAC, $(\delta,\beta)= (0.8,0.25)$ with aMAC, $(\delta,\beta)= (0.8,0.25)$, {\cite[(35)]{outperform}}, we observe that optimal power allocation in aMAC with FTN improves the rate region, both in terms of individual rates and also in the sum rate. %
In Fig.~\ref{fig:fourinone} when we compare aMAC, $(\delta,\beta)=(0.8,0.25)$ with aMAC, $(\delta,\beta)=(1,0.25)$; and MAC, $(\delta,\beta)=(0.8,0.25)$ with MAC $(\delta,\beta)=(1,0.25)$, we observe the individual gain due to FTN respectively in aMAC and MAC and conclude that the whole rate region enlarges. %
Similarly, when we compare aMAC, $(\delta,\beta)=(0.9,0.25)$ with MAC, $(\delta,\beta)=(0.9,0.25)$; and aMAC, $(\delta,\beta)=(1,0.25)$ with MAC, $(\delta,\beta)=(1,0.25)$, we observe the individual gain due to asynchronism respectively for FTN and for Nyquist transmission. Unlike FTN, asynchronism can only improve the sum rate, but not individual rates. %
In Fig.~\ref{fig:fourinone}, we also look into the effect of different $\delta$ and compare aMAC with optimal power allocation for $\delta= 0.8,0.9$ and $1$. Confirming the results about $\delta$ in point-to-point communications \cite{ourpaper}, we see that for a given $\beta$ value, $\delta(1+\beta)$ must be as close to $1$ as possible for better performance. }

Finally, we study the influence of time difference $\tau$ in Fig. 2. We can see that when the time difference between two users is half the sampling period $\frac{\delta T}{2}$, the performance is better than the performance with other values of $\tau$. This suggests that \cite[Proposition 2]{ganji} also holds in the presence of FTN.

\begin{figure}[t]
	\centering
	\includegraphics[scale=0.49]{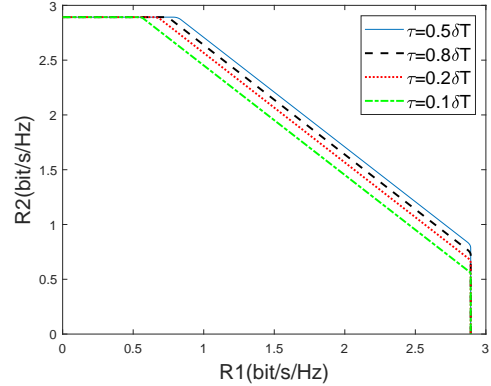}
	\caption{Capacity regions for asynchronous MAC with FTN with different time differences $\tau$ between users, { $(\delta, \beta)=(0.9,0.25)$, $N=20$}.}
	\label{fig:difftaubeta}
\end{figure}

\vspace{-0.2cm}
\section{Conclusion}

In this paper we derive the capacity region of asynchronous multiple access channels with FTN both in frequency and time domains. We find that optimal power allocation is necessary to obtain the capacity region. We also show that the capacity region definition for finite memory MAC can be generalized to infinite memory MAC. As a side result, we prove that the DFT vectors are asymptotically the eigenvectors of the Toeplitz matrix $\bm{T}_N$ as long as $\sum_{n=-\infty}^{\infty}|n|t_n<\infty$.
{ We leave the extension to more than two users for future work.}






\vspace{-0.3cm}
\bibliographystyle{IEEEtran}
\bibliography{main}




\end{document}